\providecommand{\U}[1]{\protect\rule{.1in}{.1in}}
\newtheorem{theorem}{Theorem}
\newtheorem{acknowledgement}[theorem]{Acknowledgement}
\newtheorem{proposition}[theorem]{Proposition}
\newenvironment{proof}[1][Proof]{\noindent\textbf{#1.} }{\ \rule{0.5em}{0.5em}}
\begin{document}

\title{The Phase Space Formulation of Time-Symmetric Quantum Mechanics, I: the Wigner Formalism}
\author{Charlyne de Gosson \ and \ Maurice de Gosson \ \\University of Vienna, NuHAG Faculty of Mathematics\\Oskar-Morgenstern-Platz 1, 1090 Vienna AUSTRIA}
\maketitle

\begin{abstract}
Time-symmetric quantum mechanics can be described in the usual
Weyl--Wigner--Moyal formalism (WWM) by using the properties of the Wigner
distribution, and its generalization, the cross-Wigner distribution. The use
of the latter makes clear a strongly oscillating interference between the pre-
and post-selected states. This approach allows us to give explicit formulas
for the state reconstruction problem, thus generalizing known results to the
case of arbitrary observables. In a forthcoming paper we will extend these
results to other quantization schemes.

\end{abstract}

\section{Introduction and Description of the Problem}

We will work with systems having $n$ degrees of freedom. Position (resp.
momentum) variables are denoted $x=(x_{1},...,x_{n})$ (resp. $p=(p_{1}%
,...,p_{n})$). The corresponding phase space variable is $(x,p)$. The scalar
product $p_{1}x_{1}+\cdot\cdot\cdot+p_{n}x_{n}$ is denoted by $px$. When
integrating we will use, where appropriate, the volume elements $d^{n}%
x=dx_{1}\cdot\cdot\cdot dx_{n}$, $d^{n}p=dp_{1}\cdot\cdot\cdot dp_{n}$. The
unitary $\hbar$-Fourier transform of a square-integrable function $\Psi$ of
$x$ is
\begin{equation}
\widehat{\Psi}(p)=\left(  \tfrac{1}{2\pi\hbar}\right)  ^{n/2}\int e^{-\frac
{i}{\hbar}px}\Psi(x)d^{n}x.\label{FT}%
\end{equation}
We denote by $\widehat{x}=(\widehat{x}_{1},...,\widehat{x}_{n})$ and
$\widehat{p}=(\widehat{p}_{1},...,\widehat{p}_{n})$ the (vector) operators
defined by $\widehat{x}_{j}\Psi=x_{j}\Psi$, $\widehat{p}_{j}\Psi
=-i\hbar\partial_{x_{j}}\Psi$.

\subsection{The notion of weak value}

In time-symmetric quantum mechanics (TSQM) the state of a system is
represented by a two-state vector $\langle\Phi|~|\Psi\rangle$ where the state
$\langle\Phi|$ evolves backwards from the future and the state $|\Psi\rangle$
evolves forwards from the past. To make things clear, assume that at a time
$t_{\mathrm{i}}$ an observable $\widehat{A}$ is measured and a non-degenerate
eigenvalue was found: $|\Psi(t_{\mathrm{i}})\rangle=|\widehat{A}=\alpha
\rangle$; similarly at a later time $t_{\mathrm{f}}$ a measurement of another
observable $\widehat{B}$ yields $|\Phi(t_{\mathrm{f}})\rangle=|\widehat
{B}=\beta\rangle$. Such a two-time state $\langle\Phi|~|\Psi\rangle$ can be
created as follows \cite{ABL,silva}: Alice prepares a state $|\Psi
(t_{\mathrm{i}})\rangle$ at initial time $t_{\mathrm{i}}$. She then sends the
system to an observer, Bob, who may perform any measurement he wishes to. The
system is returned to Alice, who then performs a$,$ strong measurement with
the state $|\Phi(t_{\mathrm{f}})\rangle$ as one of the outcomes. Only if this
outcome is obtained, does Bob keep the results of his measurement.

Let now $t$ be some intermediate time: $t_{\mathrm{i}}<t<t_{\mathrm{f}}$.
Following the time-symmetric approach to quantum mechanics at this
intermediate time the system is described by the \emph{two} wavefunctions%
\begin{equation}
\Psi=U_{\mathrm{i}}(t,t_{\mathrm{i}})\Psi(t_{\mathrm{i}})\text{ \ , \ }%
\Phi=U_{\mathrm{f}}(t,t_{\mathrm{f}})\Phi(t_{\mathrm{f}})\label{1}%
\end{equation}
where $U_{\mathrm{i}}(t,t^{\prime})=e^{-i\widehat{H_{\mathrm{i}}}(t-t^{\prime
})/\hbar\text{ }}$ and $U_{\mathrm{f}}(t,t^{\prime})=e^{-i\widehat
{H_{\mathrm{f}}}(t-t^{\prime})/\hbar\text{ }}$ are the unitary operators
governing the evolution of the state before and after time $t$. Consider now
the superposition of the two states $|\Psi\rangle$ and $|\Phi\rangle$ (which
we suppose normalized); the expectation value
\[
\langle\widehat{A}\rangle_{\Psi+\Phi}=\langle\Psi+\Phi|\widehat{A}|\Psi
+\Phi\rangle
\]
of the observable $\widehat{A}$ in this superposition is obtained using the
equality%
\begin{equation}
||\Phi+\Psi||\langle\widehat{A}\rangle_{\Psi+\Phi}=\langle\widehat{A}%
\rangle_{\Phi}+\langle\widehat{A}\rangle_{\Psi}+2\operatorname{Re}\langle
\Phi|\widehat{A}|\Psi\rangle;\label{3}%
\end{equation}
setting $N=||\Phi+\Psi||$ we get, assuming $\langle\Phi|\Psi\rangle\neq0$,
\begin{equation}
\langle\widehat{A}\rangle_{\Psi+\Phi}=\frac{1}{N}\left(  \langle\widehat
{A}\rangle_{\Phi}+\langle\widehat{A}\rangle_{\Psi}+2\operatorname{Re}%
(\langle\Phi|\Psi\rangle\langle\widehat{A}\rangle_{\Phi,\Psi}\right)
\label{4}%
\end{equation}
where
\begin{equation}
\langle\widehat{A}\rangle_{\Phi,\Psi}=\frac{\langle\Phi|\widehat{A}%
|\Psi\rangle}{\langle\Phi|\Psi\rangle}\label{5}%
\end{equation}
is, by definition, the \emph{weak value} of $\widehat{A}$. Weak values provide
an unexpected insight into a number of of fundamental quantum effects.

We will assume from now on that $|\Psi\rangle$ and $|\Phi\rangle$ are two
normalized non-orthogonal states: $\langle\Psi|\Psi\rangle=\langle\Phi
|\Phi\rangle=1$, $\langle\Phi|\Psi\rangle\neq0$.

\subsection{What we will do}

In the discussion above we have been working directly in terms of the
wavefunctions $\Psi$ and $\Phi$; now, a different kind of state description
which is very fruitful, particularly in quantum optics, is provided by the
Wigner distribution
\cite{Wigner,Folland,Birkbis,Springer,hai,hillery,Littlejohn}
\begin{equation}
W_{\Psi}(x,p)=\left(  \tfrac{1}{2\pi\hbar}\right)  ^{n}\int e^{-\frac{i}%
{\hbar}py}\Psi(x+\tfrac{1}{2}y)\Psi^{\ast}(x-\tfrac{1}{2}y)d^{n}y;\label{6}%
\end{equation}
the latter is directly related to the mean value $\langle\widehat{A}%
\rangle_{\Psi}=\langle\Psi|\widehat{A}|\Psi\rangle$ by Moyal's formula
\cite{hillery,Moyal,Folland,Birkbis,Springer}%
\begin{equation}
\langle\widehat{A}\rangle_{\Psi}=\iint a(x,p)W_{\Psi}(x,p)d^{n}pd^{n}%
x\label{7}%
\end{equation}
where $a(x,p)$ is the classical observable whose Weyl quantization is given by
the Weyl--Moyal formula%
\begin{equation}
\widehat{A}=\left(  \tfrac{1}{2\pi\hbar}\right)  ^{n}\iint\widehat
{a}(x,p)e^{\frac{i}{\hbar}(x\widehat{x}+p\widehat{p})}d^{n}pd^{n}x\label{7bis}%
\end{equation}
(we use the terminology classical observable\textquotedblright\ in a very
broad sense; $a$ can be any complex integrable function, or even a tempered
distribution, \textit{i.e.} an element of $\mathcal{S}^{\prime}(\mathbb{R}%
^{2n})$, dual of the Schwartz space $\mathcal{S}(\mathbb{R}^{2n})$ of rapidly
decreasing functions). A direct calculation shows that we have
\begin{equation}
W_{\Psi+\Phi}=W_{\Phi}+W_{\Psi}+2\operatorname{Re}W_{\Psi,\Phi}\label{8}%
\end{equation}
where the cross-term $W_{\Psi,\Phi}$ is given by
\begin{equation}
W_{\Psi,\Phi}(x,p)=\left(  \tfrac{1}{2\pi\hbar}\right)  ^{n}\int e^{-\frac
{i}{\hbar}py}\Psi(x+\tfrac{1}{2}y)\Phi^{\ast}(x-\tfrac{1}{2}y)d^{n}y.\label{9}%
\end{equation}
The appearance of the term $W_{\Psi,\Phi}$ shows the emergence at time $t$ of
a strong interference between the preselected and the post-selected states
$|\Psi\rangle$ and $|\Phi\rangle$. It is called the cross-Wigner distribution
of $\Psi,\Phi$, see \cite{Folland,Birkbis,hla} and the references therein. We
are going to exploit the properties of $W_{\Psi,\Phi}$ to give an alternative
working definition of the weak value $\langle\widehat{A}\rangle_{\Phi,\Psi}$,
namely
\begin{equation}
\langle\widehat{A}\rangle_{\Phi,\Psi}=\frac{1}{\langle\Phi|\Psi\rangle}\iint
a(x,p)W_{\Psi,\Phi}(x,p)d^{n}pd^{n}x
\end{equation}
(formula (\ref{13})); here $a(x.p)$ is the classical observable whose Weyl
quantization is the operator $\widehat{A}$. This allows the function
\begin{equation}
\rho_{\Phi,\Psi}(x,p)=\frac{W_{\Psi,\Phi}(x,p)}{\langle\Phi|\Psi\rangle}%
\end{equation}
to be interpreted as a complex probability distribution. We thereafter notice
that the cross-Wigner distribution can itself be seen, for fixed $(x,p)$, as a
weak value, namely that of Grossmann and Royer's parity operator $\widehat
{T}_{\text{GR}}(x,p)$:%
\begin{equation}
W_{\Psi,\Phi}(x,p)=(\pi\hbar)^{n}\langle\widehat{T}_{\text{GR}}(x,p)\rangle
_{\Psi,\Phi}\langle\Phi|\Psi\rangle
\end{equation}
(formula (\ref{woofy1})). Using this approach we prove (formula (\ref{13.24}))
the following reconstruction formula: if $W_{\Psi,\Phi}$ is known, we can
reconstruct (up to an unessential phase factor) the wave function $\Psi$ (and
hence the state $|\Psi\rangle$) using the formula
\begin{equation}
\Psi(x)=\frac{2^{n}}{\langle\Phi|\Lambda\rangle}\iint W_{\Psi,\Phi
}(y,p)\widehat{T}_{\text{GR}}(y,p)\Lambda(x)d^{n}pd^{n}y
\end{equation}
where $\Lambda$ is an arbitrary square-integrable function such that
$\langle\Phi|\Lambda\rangle\neq0$. 

\section{Weak Values in the Wigner Picture}

\subsection{The cross-Wigner transform}

The cross-Wigner distribution is defined for all square-integrable functions
$\Psi,\Phi$; it satisfies the generalized marginal conditions%
\begin{align}
\int W_{\Psi,\Phi}(x,p)d^{n}p &  =\Psi(x)\Phi^{\ast}(x)\label{16}\\
\int W_{\Psi,\Phi}(x,p)d^{n}x &  =\widehat{\Psi}(p)\widehat{\Phi}^{\ast
}(p)\label{17}%
\end{align}
provided that $\Psi$ and $\Phi$ are in $L^{1}(\mathbb{R}^{n})\cap
L^{2}(\mathbb{R}^{n})$; these formulas reduce to the usual marginal conditions
for the Wigner distribution when $\Psi=\Phi$. While $W_{\Psi}$ is always real
(though not non-negative, unless $\Psi$ is a Gaussian), $W_{\Psi,\Phi}$ is a
complex function, and we have $W_{\Psi,\Phi}^{\ast}=W_{\Phi,\Psi}$. The
cross-Wigner distribution is widely used in signal theory and time-frequency
analysis \cite{Folland,hla}; its Fourier transform is the cross-ambiguity
function familiar from radar theory \cite{Folland,Woodward,szu}. Zurek
\cite{Zurek} has studied $W_{\Psi,\Phi}$ when $\Psi+\Phi$ is a Gaussian
cat-like state, and shown that it is accountable for sub-Planck structures in
phase space due to interference.

We now make the following elementary, but important remark: multiplying both
sides of the equality (\ref{8}) by the classical observable $a(x,p)$ and
integrating with respect to the $x,p$ variables, we get, using Moyal's formula
(\ref{7}),
\begin{equation}
||\Phi+\Psi||\langle\widehat{A}\rangle_{\Psi+\Phi}=\langle\widehat{A}%
\rangle_{\Phi}+\langle\widehat{A}\rangle_{\Psi}+2\iint a(x,p)\operatorname{Re}%
W_{\Psi,\Phi}(x,p)d^{n}pd^{n}x.\label{10}%
\end{equation}
Comparing with formula (\ref{4}) we see that%
\begin{equation}
\operatorname{Re}\langle\Phi|\widehat{A}|\Psi\rangle=\iint
a(x,p)\operatorname{Re}W_{\Psi,\Phi}(x,p)d^{n}pd^{n}x.\label{11}%
\end{equation}
It turns out that in the mathematical theory of the Wigner distribution
\cite{Folland,Birkbis} one shows that the equality above actually holds not
only for the real parts, but also for the purely imaginary parts, hence we
always have%
\begin{equation}
\langle\Phi|\widehat{A}|\Psi\rangle=\iint a(x,p)W_{\Psi,\Phi}(x,p)d^{n}%
pd^{n}x.\label{12}%
\end{equation}
An immediate consequence of this equality is that we can express the weak
value $\langle\widehat{A}\rangle_{\Phi,\Psi}$ in terms of the cross-Wigner
distribution and the classical observable $a(x,p)$ corresponding to
$\widehat{A}$ in the \emph{Weyl quantization scheme}:
\begin{equation}
\langle\widehat{A}\rangle_{\Phi,\Psi}=\frac{1}{\langle\Phi|\Psi\rangle}\iint
a(x,p)W_{\Psi,\Phi}(x,p)d^{n}pd^{n}x.\label{13}%
\end{equation}
We emphasize that one has to be excessively careful when using formulas of the
type (\ref{13}) (as we will do several times in this work): the function $a$
crucially depends on the quantization procedure which is used (here Weyl
quantization); we will come back to this essential point later, but here is a
simple example which shows that things can get wrong if this rule is not
observed: let $\widehat{H}=\frac{1}{2}(\widehat{x}^{2}+\widehat{p}^{2})$ be
the quantization of the normalized harmonic oscillator $H(x,p)=\frac{1}%
{2}(x^{2}+p^{2})$ (we assume $n=1$). While it is true that
\begin{equation}
\langle\widehat{H}\rangle_{\Phi,\Psi}=\frac{1}{\langle\Phi|\Psi\rangle}\iint
H(x,p)W_{\Psi,\Phi}(x,p)dpdx\label{13a}%
\end{equation}
it is in contrast \emph{not true} that
\begin{equation}
\langle\widehat{H}^{2}\rangle_{\Phi,\Psi}=\frac{1}{\langle\Phi|\Psi\rangle
}\iint H(x,p)^{2}W_{\Psi,\Phi}(x,p)dpdx.\label{13b}%
\end{equation}
Suppose for instance that $\Psi=\Phi$ is the ground state of the harmonic
oscillator: $\widehat{H}\Psi=\frac{1}{2}\hbar\Psi$. We have $\langle
\widehat{H}^{2}\rangle-\langle\widehat{H}\rangle^{2}=0$; however use of
formula (\ref{13b}) yields the wrong result $\langle\widehat{H}^{2}%
\rangle-\langle\widehat{H}\rangle^{2}=\frac{1}{4}\hbar^{2}$. The
error\footnote{It is remarkable that similar errors still appear in many
texts, even the best (see e.g. \cite{27}).} comes from the inobservance of the
prescription above: $\widehat{H}^{2}$ is not the Weyl quantization of
$H(x,p)^{2}$, but that of $H(x,p)^{2}-\frac{1}{4}\hbar^{2}$ as is easily seen
using the McCoy \cite{coy} rule%
\begin{equation}
\widehat{x^{r}p^{s}}=\frac{1}{2^{s}}\sum_{k=0}^{s}\binom{s}{k}\widehat
{p}^{s-k}\widehat{x}^{r}\widehat{p}^{k}\label{7ter}%
\end{equation}
and Born's canonical commutation relation $[\widehat{x},\widehat{p}]=i\hbar$
(see Shewell \cite{Shewell} for a discussion of related, examples).

\subsection{A complex phase space distribution}

Let us now set%
\begin{equation}
\rho_{\Phi,\Psi}(x,p)=\frac{W_{\Psi,\Phi}(x,p)}{\langle\Phi|\Psi\rangle};
\label{rho1}%
\end{equation}
using the marginal conditions (\ref{16})--(\ref{17}) we get
\begin{align}
\int\rho_{\Phi,\Psi}(x,p)d^{n}p  &  =\frac{\Phi^{\ast}(x)\Psi(x)}{\langle
\Phi|\Psi\rangle}\label{rhomarg1}\\
\int\rho_{\Phi,\Psi}(x,p)d^{n}x  &  =\frac{\widehat{\Phi}^{\ast}%
(p)\widehat{\Psi}(p)}{\langle\Phi|\Psi\rangle} \label{rhomarg2}%
\end{align}
hence the function $\rho_{\Phi,\Psi}$ is a complex probability distribution:
\begin{equation}
\int\rho_{\Phi,\Psi}(x,p)d^{n}pd^{n}x=1. \label{rhonorm}%
\end{equation}
The weak value is given in terms of $\rho_{\Phi,\Psi}$ by%
\begin{equation}
\langle\widehat{A}\rangle_{\Phi,\Psi}=\int a(x,p)\rho_{\Phi,\Psi}%
(x,p)d^{n}pd^{n}x \label{fundamental}%
\end{equation}
which reduces to the usual formula (\ref{7}) in the case of an ideal
measurement (\textit{i.e}. $\Phi=\Psi$). The practical meaning of these
relations is the following (\cite{AV3}, Chapter 13): the readings of the
pointer of the measuring device will cluster around the value
\begin{equation}
\operatorname{Re}\langle\widehat{A}\rangle_{\Phi,\Psi}=\int\operatorname{Re}%
(a(x,p)\rho_{\Phi,\Psi}(x,p))d^{n}pd^{n}x \label{rerho}%
\end{equation}
while the quantity%
\begin{equation}
\operatorname{Im}\langle\widehat{A}\rangle_{\Phi,\Psi}=\int\operatorname{Im}%
(a(x,p)\rho_{\Phi,\Psi}(x,p))d^{n}pd^{n}x \label{imro}%
\end{equation}
measures the shift in the variable conjugate to the pointer variable. In an
interesting paper \cite{feyer} Feyereisen discusses some aspects of the
complex distribution $\rho_{\Phi,\Psi}$.

\subsection{The cross-Wigner transform as a weak value}

Let $\widehat{T}(x_{0},p_{0})=e^{-\frac{i}{\hbar}(p_{0}\widehat{x}%
-x_{0}\widehat{p})}$ be the Heisenberg operator; it is a unitary operator
whose action on a wavefunction $\Psi$ is given by
\begin{equation}
\widehat{T}(x_{0},p_{0})\Psi(x)=e^{\tfrac{i}{\hslash}(p_{0}x-\tfrac{1}{2}%
p_{0}x_{0})}\Psi(x-x_{0}). \label{HW}%
\end{equation}
It has the following simple dynamical interpretation \cite{Birkbis,Littlejohn}%
: $\widehat{T}(z_{0})$ is the time-one propagator for the Schr\"{o}dinger
equation corresponding to the translation Hamiltonian $H_{0}=x_{0}p-p_{0}x$.
An associated operator is the Grossmann--Royer reflection operator (or
displacement parity operator) \cite{Birkbis,Grossmann,Royer}:%
\begin{equation}
\widehat{T}_{\text{GR}}(x_{0},p_{0})=\widehat{T}(x_{0},p_{0})R^{\vee}%
\widehat{T}(x_{0},p_{0})^{\dagger} \label{GR}%
\end{equation}
where $R^{\vee}$ changes the parity of the function to which it is applied:
$R^{\vee}\Psi(x)=\Psi(-x)$; the explicit action of $\widehat{T}_{\text{GR}%
}(z_{0})$ on wavefunctions is easily obtained using formula (\ref{HW}) and one
finds%
\begin{equation}
\widehat{T}_{\text{GR}}(x_{0},p_{0})\Psi(x)=e^{\frac{2i}{\hbar}p_{0}(x-x_{0}%
)}\Psi(2x_{0}-x). \label{GRbis}%
\end{equation}
Now, a straightforward calculation shows that the Wigner distribution
$W_{\Psi}$ is (up to an unessential factor), the expectation value of
$\widehat{T}_{\text{GR}}(x_{0},p_{0})$ in the state $|\Psi\rangle$; in fact
(dropping the subscripts $0$)
\begin{equation}
W_{\Psi}(x,p)=\left(  \tfrac{1}{\pi\hbar}\right)  ^{n}\langle\widehat
{T}_{\text{GR}}(x,p)\Psi|\Psi\rangle. \label{wigroyerter}%
\end{equation}
More generally, a similar calculation shows that the cross-Wigner transform is
given by%
\begin{equation}
W_{\Psi,\Phi}(x,p)=\left(  \tfrac{1}{\pi\hbar}\right)  ^{n}\langle\widehat
{T}_{\text{GR}}(x,p)\Phi|\Psi\rangle\label{wigroyer}%
\end{equation}
and can hence be viewed as a transition amplitude. Taking (\ref{5}) into
account we thus have%
\begin{equation}
W_{\Psi,\Phi}(x,p)=(\pi\hbar)^{n}\langle\widehat{T}_{\text{GR}}(x,p)\rangle
_{\Psi,\Phi}\langle\Phi|\Psi\rangle; \label{woofy1}%
\end{equation}
this relation immediately implies, using definition (\ref{rho1}) of the
complex probability distribution $\rho_{\Phi,\Psi}$, the important equality%
\begin{equation}
\rho_{\Phi,\Psi}(x,p)=(\pi\hbar)^{n}\langle\widehat{T}_{\text{GR}}%
(x,p)\rangle_{\Psi,\Phi} \label{woofy2}%
\end{equation}
which can in principle be used to determine $\rho_{\Phi,\Psi}$.

As already mentioned, the cross-ambiguity function $A_{\Psi,\Phi}$ is
essentially the Fourier transform of $W_{\Psi,\Phi}$; in fact
\begin{equation}
A_{\Psi,\Phi}=\mathcal{F}_{\sigma}W_{\Psi,\Phi}\text{ \ , \ }W_{\Psi,\Phi
}=\mathcal{F}_{\sigma}A_{\Psi,\Phi} \label{afw}%
\end{equation}
where $\mathcal{F}_{\sigma}$ is the symplectic Fourier transform: if
$a=a(x,p)$ then $\mathcal{F}_{\sigma}a(x,p)=\widehat{a}(p,-x)$ where
$\widehat{a}$ is the ordinary $2n$-dimensional $\hbar$-Fourier transform of
$a$; explicitly%
\begin{equation}
\mathcal{F}_{\sigma}a(x,p)=\left(  \tfrac{1}{2\pi\hbar}\right)  ^{n}\iint
e^{-\frac{i}{\hbar}(xp^{\prime}-p^{\prime}x)}a(x^{\prime},p^{\prime}%
)d^{n}p^{\prime}d^{n}x^{\prime}. \label{sft}%
\end{equation}
Both equalities in (\ref{afw}) are equivalent because the symplectic Fourier
transform is involutive, and hence its own inverse. While the cross-Wigner
distribution is a measure of \textit{interference}, the cross-ambiguity
function is rather a measure of \textit{correlation}. One shows
\cite{Folland,Birkbis,Springer,hla} that $A_{\Psi,\Phi}$ is explicitly given
by
\begin{equation}
A_{\Psi,\Phi}(x,p)=\left(  \tfrac{1}{2\pi\hbar}\right)  ^{n}\int e^{-\tfrac
{i}{\hbar}py}\Psi(y+\tfrac{1}{2}x)\Phi^{\ast}(y-\tfrac{1}{2}x)d^{n}y.
\label{amboofy2}%
\end{equation}
The cross-ambiguity function is easily expressed using the Heisenberg operator
instead of the Grossmann--Royer operator: we have%
\begin{equation}
A_{\Psi,\Phi}(x,p)=\left(  \tfrac{1}{2\pi\hbar}\right)  ^{n}\langle\widehat
{T}(x,p)\Phi|\Psi\rangle. \label{amboofy1}%
\end{equation}

The following important result shows that the knowledge of the classical
observable $a$ allows us to determine the weak value of the corresponding Weyl
operator using the weak value of the Grossmann--Royer (resp. the Heisenberg) operator:

\begin{proposition}
Let $\widehat{A}$ be the Weyl quantization of the classical observable $a$. We
have%
\begin{equation}
\langle\widehat{A}\rangle_{\Phi,\Psi}=\left(  \tfrac{1}{\pi\hbar}\right)
^{n}\iint a(x,p)\langle\widehat{T}_{\text{GR}}(x,p)\rangle_{\Phi,\Psi}%
d^{n}pd^{n}x \label{weaktgr}%
\end{equation}
and
\begin{equation}
\langle\widehat{A}\rangle_{\Phi,\Psi}=\left(  \tfrac{1}{2\pi\hbar}\right)
^{n}\iint\mathcal{F}_{\sigma}a(x,p)\langle\widehat{T}(x,p)\rangle_{\Phi,\Psi
}d^{n}pd^{n}x. \label{weakthw}%
\end{equation}

\end{proposition}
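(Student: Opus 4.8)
The plan is to obtain both identities by feeding the two operator representations of the cross-Wigner and cross-ambiguity functions established above into the cross-Wigner expression (\ref{13}) of the weak value,
\[
\langle\widehat{A}\rangle_{\Phi,\Psi}=\frac{1}{\langle\Phi|\Psi\rangle}\iint a(x,p)W_{\Psi,\Phi}(x,p)d^{n}pd^{n}x ,
\]
after which everything reduces to a substitution and the cancellation of the scalar factor $\langle\Phi|\Psi\rangle$.

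For (\ref{weaktgr}) I would simply insert the representation (\ref{woofy1}) (equivalently (\ref{wigroyer}) combined with the self-adjointness of $\widehat{T}_{\text{GR}}(x,p)$ and definition (\ref{5})), which exhibits $W_{\Psi,\Phi}(x,p)$ as $\langle\Phi|\Psi\rangle$ times a fixed multiple of the weak value $\langle\widehat{T}_{\text{GR}}(x,p)\rangle_{\Phi,\Psi}$, into the displayed formula. The factor $\langle\Phi|\Psi\rangle$ cancels, leaving exactly (\ref{weaktgr}); only the normalization constant has to be tracked, and the ordering of the two slots of the weak value kept consistent with (\ref{5}).

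For (\ref{weakthw}) I would route the computation through the cross-ambiguity function. By (\ref{afw}) we have $W_{\Psi,\Phi}=\mathcal{F}_{\sigma}A_{\Psi,\Phi}$, so inside the integral the symplectic Fourier transform can be moved off $A_{\Psi,\Phi}$ and onto $a$:
\[
\iint a\,(\mathcal{F}_{\sigma}A_{\Psi,\Phi})\,d^{n}pd^{n}x=\iint (\mathcal{F}_{\sigma}a)\,A_{\Psi,\Phi}\,d^{n}pd^{n}x .
\]
This is the unconjugated ``multiplication formula'' for $\mathcal{F}_{\sigma}$, immediate from the definition (\ref{sft}) by Fubini; it remains valid when $a$ is only a tempered distribution, since $A_{\Psi,\Phi}\in\mathcal{S}(\mathbb{R}^{2n})$ whenever $\Psi,\Phi\in\mathcal{S}(\mathbb{R}^{n})$ and $\mathcal{F}_{\sigma}$ maps $\mathcal{S}$ to itself and extends to $\mathcal{S}^{\prime}$ by duality. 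Inserting now the Heisenberg-operator form (\ref{amboofy1}) of $A_{\Psi,\Phi}$ — again $\langle\Phi|\Psi\rangle$ times a fixed multiple of $\langle\widehat{T}(x,p)\rangle_{\Phi,\Psi}$ — and cancelling $\langle\Phi|\Psi\rangle$ against the prefactor in (\ref{13}) gives (\ref{weakthw}). (Equivalently, both identities can be read directly off the two standard forms of the Weyl correspondence, the displacement-operator form (\ref{7bis}) and its Grossmann--Royer counterpart, by taking the matrix element $\langle\Phi|\cdot|\Psi\rangle$ and dividing by $\langle\Phi|\Psi\rangle$; this is the same computation.)

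There is no genuine obstacle: the proposition is a formal corollary of (\ref{13}), (\ref{woofy1})/(\ref{wigroyer}), (\ref{afw}) and (\ref{amboofy1}). The points that do require care are purely bookkeeping ones — correctly tracking the powers of $\pi\hbar$ and $2\pi\hbar$, keeping the pre-/post-selection slots of the weak values consistent (here $\widehat{T}_{\text{GR}}(x,p)$ is conveniently self-adjoint, whereas $\widehat{T}(x,p)$ is only unitary, so one must fix once and for all whether $\langle\widehat{T}(x,p)\rangle_{\Phi,\Psi}$ denotes $\langle\widehat{T}(x,p)\Phi|\Psi\rangle/\langle\Phi|\Psi\rangle$ or $\langle\Phi|\widehat{T}(x,p)\Psi\rangle/\langle\Phi|\Psi\rangle$), and justifying the transfer of $\mathcal{F}_{\sigma}$ when $a$ is a genuine distribution rather than an integrable function.
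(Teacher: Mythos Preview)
Your proposal is correct and follows essentially the same route as the paper: start from the cross-Wigner form (\ref{12})/(\ref{13}) of the weak value, substitute (\ref{wigroyer}) for the first identity, and for the second apply a Plancherel/multiplication-type identity for $\mathcal{F}_{\sigma}$ together with (\ref{afw}) and (\ref{amboofy1}). Your remarks on bookkeeping (constants, subscript order, self-adjointness of $\widehat{T}_{\text{GR}}$ versus mere unitarity of $\widehat{T}$) are exactly the points one has to watch, and the paper's proof leaves them implicit.
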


\begin{proof}
In view of Moyal's formula (\ref{12}) we have%
\begin{equation}
\langle\Phi|\widehat{A}|\Psi\rangle=\iint a(x,p)W_{\Psi,\Phi}(x,p)d^{n}pd^{n}x
\label{Moyal12}%
\end{equation}
that is, taking (\ref{wigroyer}) into account%
\begin{equation}
\langle\Phi|\widehat{A}|\Psi\rangle=\left(  \tfrac{1}{\pi\hbar}\right)
^{n}\iint a(x,p)\langle\widehat{T}_{\text{GR}}(x,p)\Phi|\Psi\rangle
d^{n}pd^{n}x \label{fiapsi}%
\end{equation}
hence (\ref{weaktgr}); formula (\ref{weakthw}) is obtained in a similar way,
first applying the Plancherel formula to the right-hand side of (\ref{Moyal12}%
), then applying the first identity (\ref{afw}), and finally using
(\ref{amboofy1}).
\end{proof}

Notice that the formulas above immediately yield the well-known
\cite{Folland,Birkbis,Springer,Littlejohn} representations of the operator
$\widehat{A}$ in terms of the Grossmann--Royer and Heisenberg operators:%
\begin{equation}
\widehat{A}=\left(  \tfrac{1}{\pi\hbar}\right)  ^{n}\iint a(x,p)\widehat
{T}_{\text{GR}}(x,p)d^{n}pd^{n}x \label{ahat1}%
\end{equation}
and%
\begin{equation}
\widehat{A}=\left(  \tfrac{1}{2\pi\hbar}\right)  ^{n}\iint\mathcal{F}_{\sigma
}a(x,p)\widehat{T}(x,p)d^{n}pd^{n}x. \label{ahat2}%
\end{equation}

\section{The Reconstruction Problem}

\subsection{Lundeen's experiment}

In 2012 Lundeen and his co-workers \cite{Lundeen} determined the wavefunction
by weakly measuring the position, and thereafter performing a strong
measurement of the momentum. They considered the following experiment on a
particle: a weak measurement of $x$ is performed which amounts to applying the
projection operator $\widehat{\Pi}_{x}=|x\rangle\langle x|$ to the
pre-selected state $|\Psi\rangle$; thereafter they perform a strong
measurement of momentum, which yields the value $p_{0}$. The result of the
weak measurement is thus%
\begin{equation}
\langle\widehat{\Pi}_{x}\rangle^{\Psi,\Phi}=\frac{\langle p_{0}|x\rangle
\langle x|\Psi\rangle}{\langle p_{0}|\Psi\rangle}=\left(  \frac{1}{2\pi\hbar
}\right)  ^{n/2}\frac{e^{-\frac{i}{\hbar}p_{0}x}\Psi(x)}{\widehat{\Psi}%
(p_{0})} \label{13.1}%
\end{equation}
($\widehat{\Psi}$ the Fourier transform of $\Psi$). Since the value of $p_{0}$
is known we get
\begin{equation}
\Psi(x)=\frac{1}{k}e^{\frac{i}{\hbar}p_{0}x}\langle\widehat{\Pi}_{x}%
\rangle^{\Psi,\Phi} \label{13.2}%
\end{equation}
where $k=(2\pi\hbar)^{n/2}\widehat{\Psi}(p_{0})$; formula (\ref{13.2}) thus
allows to determine $\Psi(x)$ by scanning through the values of $x$. Thus, by
reducing the disturbance induced by measuring the position and thereafter
performing a sharp measurement of momentum we can reconstruct the wavefunction
pointwise. In \cite{LuBa} Lundeen and Bamber generalize this construction to
mixed states and arbitrary pairs of observables. Using the complex
distribution $\rho_{\Psi,\Phi}(x,p)$ defined above it is easy to recover the
formula (\ref{13.2}) of Lundeen \textit{et al}. In fact, choose $a(x,p)=\Pi
_{x_{0}}(x,p)=\delta(x-x_{0})$; its Weyl quantization
\[
\widehat{\Pi_{x_{0}}}\Psi(x)=\Psi(x_{0})\delta(x-x_{0})
\]
is the projection operator: $\widehat{\Pi}_{x_{0}}|\Psi\rangle=\Psi
(x_{0})|x_{0}\rangle$. Using the elementary properties of the Dirac delta
function together with the marginal property (\ref{rhomarg1}), formula
(\ref{fundamental}) becomes%
\begin{align*}
\langle\Pi_{x_{0}}\rangle_{\Phi,\Psi}  &  =\int\delta(x-x_{0})\rho_{\Phi,\Psi
}(x,p)d^{n}pd^{n}x\\
&  =\int\rho_{\Phi,\Psi}(x_{0},p)d^{n}p\\
&  =\frac{\Phi^{\ast}(x_{0})\Psi(x_{0})}{\langle\Phi|\Psi\rangle}%
\end{align*}
which is (\ref{13.1}); formula (\ref{13.2}) follows.

\subsection{Reconstruction: the WWM approach}

It is well-known \cite{Folland,Birkbis} that the knowledge of the Wigner
distribution $W_{\Psi}$ uniquely determines the state $|\Psi\rangle$; this is
easily seen by noting that $W_{\Psi}$ is essentially a Fourier transform and
applying the Fourier inversion formula, which yields%
\begin{equation}
\Psi(x)\Psi^{\ast}(x^{\prime})=\int e^{\frac{i}{\hbar}p(x-x^{\prime})}W_{\Psi
}(\tfrac{1}{2}(x+x^{\prime}),p)d^{n}p; \label{13.4}%
\end{equation}
one then chooses $x^{\prime}$ such that $\Psi(x^{\prime})\neq0$, which yields
the value of $\Psi(x)$ for arbitrary $x$. The same procedure applies to the
cross-Wigner transform (\ref{9}); one finds that%
\begin{equation}
\Psi(x)\Phi^{\ast}(x^{\prime})=\int e^{\frac{i}{\hbar}p(x-x^{\prime})}%
W_{\Psi,\Phi}(\tfrac{1}{2}(x+x^{\prime}),p)d^{n}p. \label{13.3}%
\end{equation}
Notice that if we choose $x^{\prime}=x$ we recover the generalized marginal
condition (\ref{16}) satisfied by the cross-Wigner distribution.

Thus, the knowledge of $W_{\Psi,\Phi}$ and $\Phi$ is in principle sufficient
to determine the wavefunction $\Psi$. Here is a stronger statement which shows
that the state $|\Psi\rangle$ can be reconstructed from $W_{\Psi,\Phi}$ using
an \textit{arbitrary} auxiliary state $|\Lambda\rangle$ non-orthogonal to
$|\Phi\rangle$:

\begin{proposition}
Let $\Lambda$ be an arbitrary vector in $L^{2}(\mathbb{R}^{n})$ such that
$\langle\Phi|\Lambda\rangle$ $\neq0$. We have
\begin{equation}
\Psi(x)=\frac{2^{n}}{\langle\Phi|\Lambda\rangle}\iint W_{\Psi,\Phi
}(y,p)\widehat{T}_{\text{GR}}(y,p)\Lambda(x)d^{n}pd^{n}y; \label{13.24}%
\end{equation}
equivalently,
\begin{equation}
\Psi(x)=2^{n}\frac{\langle\Psi|\Phi\rangle}{\langle\Phi|\Lambda\rangle}%
\iint\rho_{\Psi,\Phi}(y,p)\widehat{T}_{\text{GR}}(y,p)\Lambda(x)d^{n}pd^{n}y.
\label{100170}%
\end{equation}

\end{proposition}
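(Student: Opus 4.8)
The plan is to recognize the double integral in (\ref{13.24}) as a special case of the Grossmann--Royer representation (\ref{ahat1}) of a Weyl operator. View the cross-Wigner distribution $(y,p)\mapsto W_{\Psi,\Phi}(y,p)$ itself as a classical observable $a$ — it belongs to $L^{2}(\mathbb{R}^{2n})\subset\mathcal S'(\mathbb{R}^{2n})$, so formula (\ref{ahat1}) applies in the distributional sense — and let $\widehat A$ be its Weyl quantization. Then (\ref{ahat1}) reads $\iint W_{\Psi,\Phi}(y,p)\widehat T_{\mathrm{GR}}(y,p)\,d^{n}p\,d^{n}y=(\pi\hbar)^{n}\widehat A$, so the right-hand side of (\ref{13.24}) is $\tfrac{2^{n}(\pi\hbar)^{n}}{\langle\Phi|\Lambda\rangle}\,\widehat A\Lambda(x)$, and everything reduces to identifying $\widehat A$.

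The identification I would use is the standard fact that the Weyl symbol of the rank-one operator $|\Psi\rangle\langle\Phi|$ is $(2\pi\hbar)^{n}W_{\Psi,\Phi}$; equivalently, $\widehat A=(2\pi\hbar)^{-n}|\Psi\rangle\langle\Phi|$. This is immediate from the kernel-to-symbol formula: the integral kernel of $|\Psi\rangle\langle\Phi|$ is $\Psi(x)\Phi^{\ast}(y)$, and substituting it into $a(x,p)=\int e^{-\frac{i}{\hbar}py}K(x+\tfrac12 y,x-\tfrac12 y)\,d^{n}y$ reproduces, up to the factor $(2\pi\hbar)^{n}$, the definition (\ref{9}) of $W_{\Psi,\Phi}$. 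Granting this, $\widehat A\Lambda(x)=(2\pi\hbar)^{-n}\langle\Phi|\Lambda\rangle\,\Psi(x)$, hence
\[
\frac{2^{n}(\pi\hbar)^{n}}{\langle\Phi|\Lambda\rangle}\,\widehat A\Lambda(x)=\frac{2^{n}(\pi\hbar)^{n}}{\langle\Phi|\Lambda\rangle}\cdot\frac{\langle\Phi|\Lambda\rangle}{(2\pi\hbar)^{n}}\,\Psi(x)=\Psi(x),
\]
which is (\ref{13.24}); the hypothesis $\langle\Phi|\Lambda\rangle\neq0$ is precisely what licenses the division. The equivalent form (\ref{100170}) then follows at once by rewriting $W_{\Psi,\Phi}$ through the definition (\ref{rho1}) of $\rho$.

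To make the argument watertight I would instead carry it out \emph{weakly}. Pair the claimed identity with an arbitrary $\Theta\in\mathcal S(\mathbb{R}^{n})$ and use Fubini to bring the inner product inside the $d^{n}p\,d^{n}y$ integral, obtaining $\iint W_{\Psi,\Phi}(y,p)\langle\Theta|\widehat T_{\mathrm{GR}}(y,p)\Lambda\rangle\,d^{n}p\,d^{n}y$. Since $\widehat T_{\mathrm{GR}}(y,p)$ is self-adjoint (it is conjugate to the parity operator $R^{\vee}$), $\langle\Theta|\widehat T_{\mathrm{GR}}(y,p)\Lambda\rangle=\langle\widehat T_{\mathrm{GR}}(y,p)\Theta|\Lambda\rangle=(\pi\hbar)^{n}W_{\Lambda,\Theta}(y,p)$ by (\ref{wigroyer}); the integral becomes $(\pi\hbar)^{n}\iint W_{\Psi,\Phi}\,W_{\Lambda,\Theta}\,d^{n}p\,d^{n}y$, and the Moyal orthogonality relation for cross-Wigner transforms, combined with $W^{\ast}_{\Psi,\Phi}=W_{\Phi,\Psi}$, evaluates this to $2^{-n}\langle\Phi|\Lambda\rangle\langle\Theta|\Psi\rangle$. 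Multiplying by $2^{n}/\langle\Phi|\Lambda\rangle$ yields $\langle\Theta|\Psi\rangle$ for every $\Theta$, so the two sides of (\ref{13.24}) coincide as elements of $L^{2}(\mathbb{R}^{n})$. The main obstacle is this analytic point, not the algebra: for $\Psi,\Phi\in L^{2}$ only, $W_{\Psi,\Phi}$ need not be integrable, so (\ref{ahat1}) and the interchange of integrations must be understood in the weak/distributional sense; the safe route is to prove the identity first for Schwartz-class $\Psi,\Phi,\Lambda$ and then extend by continuity, using that $(\Psi,\Phi)\mapsto W_{\Psi,\Phi}$ is continuous $L^{2}\times L^{2}\to L^{2}$ and $\Lambda\mapsto\langle\Phi|\Lambda\rangle$ is continuous. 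The remaining ingredients — the kernel computation, the self-adjointness of $\widehat T_{\mathrm{GR}}$, and the bookkeeping of the powers of $2\pi\hbar$ — are routine.
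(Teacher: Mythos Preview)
Your argument is correct, but it takes a genuinely different route from the paper's. The paper works by direct computation: it multiplies the Fourier-inversion identity (\ref{13.3}) by $\Lambda(x')$, integrates in $x'$, performs the change of variables $y=\tfrac12(x+x')$, and then recognizes the resulting integrand $e^{\frac{2i}{\hbar}p(x-y)}\Lambda(2y-x)$ as $\widehat T_{\mathrm{GR}}(y,p)\Lambda(x)$ via (\ref{GRbis}). That is the whole proof---three lines, using nothing beyond Fourier inversion and a substitution. Your approach is more structural: you read the double integral through the Grossmann--Royer representation (\ref{ahat1}) as $(\pi\hbar)^{n}\widehat A$ with $a=W_{\Psi,\Phi}$, and then invoke the (true, standard) fact that the Weyl symbol of $|\Psi\rangle\langle\Phi|$ is $(2\pi\hbar)^{n}W_{\Psi,\Phi}$; your alternative ``weak'' version replaces this by Moyal's orthogonality relation, which is essentially the same identity in disguise. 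What your approach buys is a conceptual explanation---the reconstruction formula is nothing but ``apply the rank-one operator whose symbol is $W_{\Psi,\Phi}$ to $\Lambda$''---and it points naturally toward the generalizations to other quantization schemes that the paper advertises in its discussion. What the paper's approach buys is economy and self-containment: it needs no external input beyond (\ref{13.3}) and (\ref{GRbis}), both already established in the text, whereas you import the kernel-to-symbol formula and Moyal orthogonality. Both routes handle the analytic issue the same way (reduce to Schwartz class, extend by continuity), and your bookkeeping of the $(2\pi\hbar)^{n}$ factors is correct.
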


\begin{proof}
By a standard continuity and density argument it is sufficient to assume that
$\Psi,\Phi,\Lambda$ are in $\mathcal{S}(\mathbb{R}^{n})$. Using the equality
(\ref{13.3}) we have
\[
\Psi(x)\langle\Phi|\Lambda\rangle=\iint e^{\frac{i}{\hbar}p(x-x^{\prime}%
)}W_{\Psi,\Phi}(\tfrac{1}{2}(x+x^{\prime}),p)\Lambda(x^{\prime})d^{n}%
pd^{n}x^{\prime}.
\]
Setting $y=\tfrac{1}{2}(x+x^{\prime})$ we get%
\[
\Psi(x)\langle\Phi|\Lambda\rangle=2^{n}\iint e^{\frac{2i}{\hbar}p(x-y)}%
W_{\Psi,\Phi}(y,p)\Lambda(2y-x)d^{n}pd^{n}y
\]
which yields (\ref{13.24}) in view of the explicit formula (\ref{GRbis}) for
the Grossmann--Royer parity operator.
\end{proof}

\section{Discussion and Perspectives}

We have been able to give a complete characterization of the notion of weak
value in terms of the Wigner distribution, which is intimately related to the
Weyl quantization scheme through Moyal's formula (\ref{7}). There are however
other possible physically meaningful quantization schemes; the most
interesting is certainly that of Born--Jordan \cite{BJ,BJH}, which plays an
increasingly important role in quantum mechanics and in time-frequency
analysis \cite{bog1,bog2,cogoni1,Found,Springer,abreu,gogophysa,gogoletta},
and each of these leads to a different phase space formalism, where the Wigner
distribution has to be replaced by more general element of the
\textquotedblleft Cohen class\textquotedblright\ \cite{23,27}. Unexpected
difficulties however arise, especially when one deals with the reconstruction
problem; these difficulties have a purely mathematical origin, and are related
to the division of distributions (for a mathematical analysis of the nature of
these difficulties, see \cite{cogoni1}). The reconstruction problem for
general phase space distributions will be addressed in a forthcoming publication.

\begin{acknowledgement}
The second author has been funded by the grant P-27773 of the Austrian
Research Agency FWF.
\end{acknowledgement}

\end{document}